\documentclass[conference]{IEEEtran}
\IEEEoverridecommandlockouts

\usepackage{cite}
\usepackage{amsmath,amssymb,amsfonts}
\usepackage{algorithmic}
\usepackage{graphicx}
\usepackage{textcomp}
\usepackage{xcolor}
\usepackage{amsthm}
\newtheorem{proposition}{Proposition}

\def\BibTeX{{\rm B\kern-.05em{\sc i\kern-.025em b}\kern-.08em
    T\kern-.1667em\lower.7ex\hbox{E}\kern-.125emX}}
\begin{document}

\title{Revisiting the Independence Assumption in LEO Satellite-to-Ground Optical Links: A State-Coupled Joint Fading Model}

\author{
    \IEEEauthorblockN{
        Xinyan Xie\IEEEauthorrefmark{1},
        Xuesong Wang\IEEEauthorrefmark{2},
        Jinghua Zhang\IEEEauthorrefmark{3},
        Fengrui Yang\IEEEauthorrefmark{1},
        Yongheng Wen\IEEEauthorrefmark{1},
        Haoyang He\IEEEauthorrefmark{1},
        Dong Zhao\IEEEauthorrefmark{1}
    }
    \IEEEauthorblockA{
        \IEEEauthorrefmark{1}Fiber Optics Research Center, College of Smart Materials and Future Energy, Fudan University, Shanghai, China
    }
    \IEEEauthorblockA{
        \IEEEauthorrefmark{2}School of Science and Engineering, The Chinese University of Hong Kong, Shenzhen, Guangdong, China
    }
    \IEEEauthorblockA{
        \IEEEauthorrefmark{3}Technology and Engineering Center for Space Utilization, Chinese Academy of Sciences, Beijing, China
    }    
    \IEEEauthorblockA{
        Email: zhaodong@fudan.edu.cn
    }
}

\maketitle

\begin{abstract}
Performance analysis of low Earth orbit (LEO) satellite-to-ground optical links relies on composite fading models that typically evaluate scintillation and angular loss under the assumption of statistical independence. While ensuring analytical tractability, this assumption decouples fading mechanisms driven by the same atmospheric turbulence and fails to capture the distinct effects of free atmosphere (FA) and boundary layer (BL) perturbations. To model this coupling while preserving tractability, this paper develops a state-coupled joint fading model. In the proposed framework, aperture-averaged scintillation and effective angular loss are jointly characterized by a discrete slow atmospheric state, parameterized by separate FA and BL scaling factors. By replacing unconditional independence with state-conditioned independence, the model enables a closed-form derivation of the outage probability, preserving the computational simplicity of the independent baseline. Numerical results show that the independent baseline can misestimate outage under non-nominal layered turbulence states. This outage prediction bias varies with elevation because the relative roles of scintillation and angular loss change with the link geometry, resulting in different residual angular correction requirements for a given outage target.
\end{abstract}

\begin{IEEEkeywords}
Low Earth orbit (LEO), satellite-to-ground optical links, outage probability, state-coupled joint fading.
\end{IEEEkeywords}

\section{Introduction}
Low Earth orbit (LEO) satellite constellations are becoming an important component of non-terrestrial networks (NTNs) toward 6G~\cite{Intro_1}. Since LEO satellites often return large volumes of data within short contact windows, LEO-to-ground free-space optical (FSO) links are emerging as feeder links with high capacity because of their large optical bandwidth and narrow optical beams~\cite{Intro_2,Intro_3}. However, unlike inter-satellite optical links, LEO-to-ground paths traverse the atmosphere, where absorption, scattering, and turbulence impair the received optical signal~\cite{Intro_4}. Therefore, accurate channel modeling at the physical layer is essential for the analysis and design of LEO-to-ground optical links.

For LEO-to-ground optical links, atmospheric turbulence is a major source of random downlink impairment. Refractive-index fluctuations perturb the received optical field through intensity fluctuations (i.e., scintillation) and wavefront phase distortions, including wavefront tilt associated with angle-of-arrival (AoA) fluctuations. The millisecond scale evolution of these fluctuations induced by the turbulence supports a block-fading approximation, making outage probability a natural reliability metric~\cite{Outage_based}. Recent analytical studies have moved beyond scintillation-only models to composite formulations that include atmospheric fading and pointing errors~\cite{Intro_4,E2E_JSAC}. In such formulations, received power fluctuations are represented by an intensity term associated with scintillation~\cite{PJ} and an angular-loss term associated with angular misalignment. The atmospheric part of this angular loss is induced by AoA fluctuations, while platform jitter adds an independent mechanical component~\cite{JOCN}. Existing analyses typically model the intensity term and the angular-loss term as statistically independent random factors to preserve tractability~\cite{CL}.

This independence assumption enables tractable outage analysis under slow fading~\cite{2007_JLT}, but removes the common turbulence driver of scintillation and atmospheric AoA fluctuations. Both effects are induced by refractive-index fluctuations along the same propagation path, as supported by simultaneous intensity and AoA measurements that yield consistent estimates of the refractive-index structure parameter~\cite{2023_OL}. This common origin does not make scintillation and AoA statistically equivalent because turbulence in the free atmosphere (FA) and the boundary layer (BL) contributes differently to their statistics. Direct derivation of the joint probability density function (PDF) of the resulting intensity and angular-loss terms from the turbulence profile is not tractable for outage analysis. Therefore, the modeling challenge is to construct a tractable joint model that preserves the common turbulence origin of scintillation and atmospheric AoA while retaining distinct statistical roles for the intensity and angular-loss terms.

To address the modeling challenge, we develop a state-coupled joint fading and outage analysis framework for LEO-to-ground optical links. The framework introduces a common slow atmospheric state over a layered turbulence profile. This state jointly controls the scintillation and atmospheric AoA statistics through separate FA and BL scaling, while mechanical jitter remains an independent component of angular loss. The main contributions are summarized as follows:
\begin{itemize}
\item We establish a joint channel model that decomposes received power into deterministic large-scale gain, aperture-averaged scintillation, and effective angular loss. The coupling enters through a discrete slow atmospheric state that controls the scintillation statistics and the atmospheric AoA variance.

\item We derive an analytical outage characterization. By conditioning on the slow atmospheric state and averaging over its state probabilities, the outage probability is expressed as a one-dimensional integral and further reduced to closed form.

\item We quantify the impact of atmospheric state coupling on outage prediction and residual angular correction design. Numerical results show that an independent baseline can misestimate outage under non-nominal layered turbulence states because the relative roles of scintillation and angular loss vary with elevation.
\end{itemize}

\section{System and Channel Model}
\label{sec:II}

We consider a downlink optical link from a LEO satellite to an optical ground station (OGS). Let $P_t$ be the average transmitted optical power and $P_r(t)$ be the instantaneous received optical power. The link geometry is parameterized by the elevation angle $\epsilon(t)\in[\epsilon_{\min},\pi/2]$, where $\epsilon_{\min}$ is the minimum operational elevation angle. As illustrated in Fig.~\ref{fig:1}, the considered LEO-to-OGS downlink channel is decomposed into a large-scale path gain and two random fading components, i.e., aperture-averaged scintillation and effective angular loss induced by AoA fluctuations and mechanical jitter. The received optical power can be expressed as
\begin{equation}
P_r(t)=P_t h_c(t)h_a(t)h_p(t),
\label{eq:channel_decomposition}
\end{equation}
where $h_c(t)$ is the large-scale path gain, $h_a(t)$ is the aperture-averaged scintillation factor, and $h_p(t)$ is the effective angular-loss factor.

\begin{figure}[t]
\centering
\includegraphics[width=0.9\linewidth]{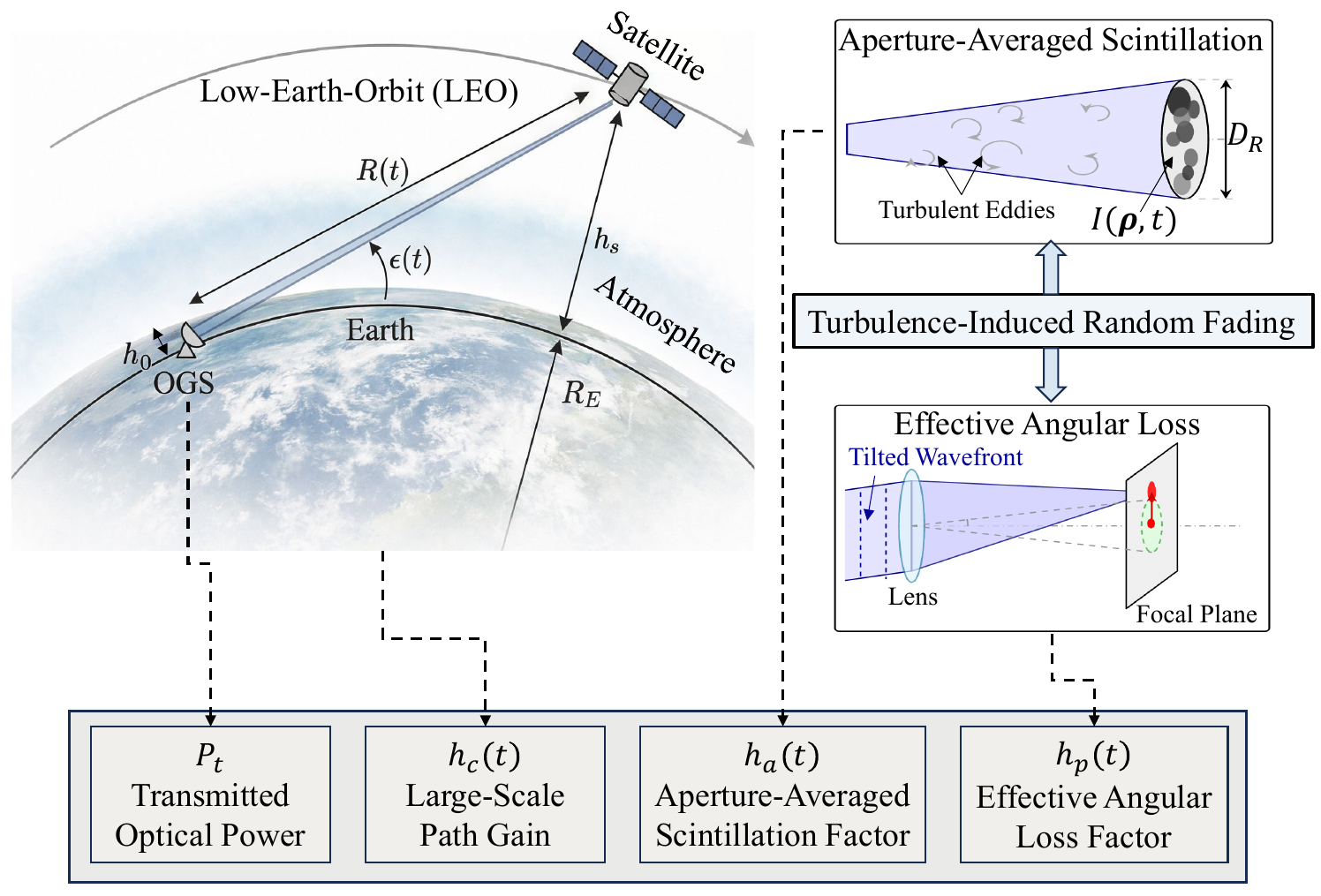}
\caption{LEO-to-OGS downlink system and channel model.}
\label{fig:1}
\end{figure}

\subsection{Large-Scale Path Gain}

The large-scale gain $h_c(t)$ captures the slowly varying deterministic part of the channel, which is governed by the link geometry and average propagation conditions. For a given time instant $t$, it is decomposed into the optical system gain, free-space path loss, and atmospheric extinction, given by
\begin{equation}
h_c(t)
=
\underbrace{\eta_T \eta_R G_T G_R}_{\text{optical system gain}}
\underbrace{\left(\frac{\lambda_0}{4\pi R(t)}\right)^2}_{\text{free-space path loss}}
\underbrace{h_{\mathrm{atm}}(t)}_{\text{atmospheric extinction}},
\label{eq:hc_compact}
\end{equation}
where $G_T \triangleq (\pi D_T/\lambda_0)^2$ and $G_R \triangleq (\pi D_R/\lambda_0)^2$ are the transmitter and receiver telescope gains, respectively. $D_T$ and $D_R$ denote the transmitter and receiver aperture diameters, $\eta_T$ and $\eta_R$ denote the corresponding optical efficiencies, and $\lambda_0$ is the operating wavelength. $R(t)$ is the slant range between the satellite and the OGS. Based on spherical Earth geometry, the slant range at elevation angle $\epsilon(t)$ is given by~\cite{E2E_JSAC}
\begin{equation}
\begin{split}
R(t)
&=
\sqrt{(R_E+h_s)^2-(R_E+h_0)^2\cos^2\epsilon(t)}
\\
&\quad
-(R_E+h_0)\sin\epsilon(t),
\end{split}
\label{eq:slant_range}
\end{equation}
where $R_E$, $h_s$, and $h_0$ denote the Earth radius, the satellite altitude, and the OGS altitude, respectively.

The atmospheric transmittance $h_{\mathrm{atm}}(t)$ represents deterministic extinction along the slant path. Under the standard plane parallel approximation for a thin atmosphere, the factor along the slant path is approximated by $\csc\epsilon(t)$~\cite{Intro_4}, so that
\begin{equation}
h_{\mathrm{atm}}(t)=\exp\!\bigl[-\tau_0(V,\lambda_0)\csc\epsilon(t)\bigr],
\label{eq:hatm_decoupled}
\end{equation}
where $\tau_0(V,\lambda_0)$ is the total zenith optical depth at the operating wavelength $\lambda_0$. For a compact deterministic model, we decompose $\tau_0(V,\lambda_0)$ as $\tau_0(V,\lambda_0)=\tau_{\mathrm{abs}}(\lambda_0)+\tau_{\mathrm{scat}}(V,\lambda_0)$, where $\tau_{\rm abs}(\lambda_0)=-\ln T_{\mathrm{abs}}(\lambda_0)$, and $T_{\mathrm{abs}}(\lambda_0)$ denotes the absorption-only zenith transmittance~\cite{1990_T}. $\tau_{\mathrm{scat}}(V,\lambda_0)$ is given by~\cite{2001_kruse}
\begin{equation}
\tau_{\mathrm{scat}}(V,\lambda_0)
=
\frac{3.912}{V}
\left(\frac{\lambda_0}{550\,\mathrm{nm}}\right)^{-q(V)}
H_{\mathrm{BL}},
\label{eq:tau_scat}
\end{equation}
where $V$ is the meteorological visibility, $q(V)$ is the Kruse exponent, and $H_{\mathrm{BL}}$ is the equivalent boundary-layer scale height. Therefore, the optical system gain is constant, while the free-space path loss and $h_{\mathrm{atm}}(t)$ vary deterministically with the pass geometry through $R(t)$ and $\epsilon(t)$. Hence, $h_c(t)$ is a deterministic large-scale term and will be absorbed into the normalized threshold in Section~\ref{sec:III}.

\subsection{Turbulence-Induced Random Fading}

With the deterministic large-scale gain $h_c(t)$ specified, the residual fluctuation is induced by atmospheric turbulence. For the statistical closure of the two random components in \eqref{eq:channel_decomposition}, we adopt the aperture-averaged scintillation index $\sigma_a^2(t)$ and the mean-square AoA fluctuation $\beta_{\mathrm{rms}}^2(t)$ as the canonical second-order descriptors of the amplitude term $h_a(t)$ and the angular term $h_p(t)$. Both are generated from the refractive-index structure parameter profile $C_n^2(h)$.

The standard Hufnagel-Valley (HV) profile is given by~\cite{Cn}
\begin{equation}
\begin{aligned}
C_n^2(h)
&=
0.00594\!\left(\frac{v_{\mathrm{rms}}}{27}\right)^2
(10^{-5}h)^{10}e^{-\frac{h}{1000}}
\\
&\quad+
2.7\times10^{-16}e^{-\frac{h}{1500}}
+
A_0e^{-\frac{h-h_0}{100}},
\end{aligned}
\label{eq:Cn_HV}
\end{equation}
where $h$ is the altitude, $v_{\mathrm{rms}}$ is the high-altitude root
mean square wind speed, and $A_0$ is the turbulence strength at the OGS altitude $h_0$. For the proposed state-coupled model, we decompose \eqref{eq:Cn_HV} into a FA component $C_{n,\mathrm{FA}}^2(h)$, formed by the first two terms, and a BL component $C_{n,\mathrm{BL}}^2(h)$, formed by the last term. The rationale is that the first two terms are independent of $A_0$ and decay over kilometer scales, whereas the last is governed by $A_0$ and decays over the 100 m scale. Accordingly, \eqref{eq:Cn_HV} is rewritten as
\begin{equation}
C_n^2(h)=C_{n,\mathrm{FA}}^2(h)+C_{n,\mathrm{BL}}^2(h),
\label{eq:Cn_split}
\end{equation}
and the corresponding second-order statistics are given by
\begin{equation}
\mathbf z(t)\triangleq
\begin{bmatrix}
\sigma_a^2(t)\\
\beta_{\mathrm{rms}}^2(t)
\end{bmatrix}
=
\int_{h_0}^{h_s}\mathbf K(h,t)\,C_n^2(h)\,dh,
\label{eq:z_operator}
\end{equation}
where $\mathbf K(h,t)\triangleq[\mathcal K_a(h,t),\,\mathcal K_\beta(t)]^\top$. The two kernels are given by~\cite{PJ,Cn}
\begin{subequations}\label{eq:kernels}
\begin{align}
\mathcal K_a(h,t)
&=
8.70\,k_0^{\frac{7}{6}}\Delta h^{\frac{5}{6}}\csc^{\frac{11}{6}}\!\epsilon(t)
\notag\\
&\quad\times
\Re\!\left\{[\alpha(t)+i\xi(h)]^{\frac{5}{6}}-\alpha^{\frac{5}{6}}(t)\right\},
\label{eq:Ka}
\\
\mathcal K_\beta(t)
&=
2.91\,D_R^{-\frac{1}{3}}\csc\epsilon(t),
\label{eq:Kb}
\end{align}
\end{subequations}
where $k_0=2\pi/\lambda_0$, $\Delta h=h_s-h_0$, $\xi(h)=(h-h_0)/\Delta h$, and $\alpha(t)\triangleq k_0D_R^2/(16\Delta h\csc\epsilon(t))$. Since \eqref{eq:z_operator} is linear in $C_n^2(h)$, the decomposition in \eqref{eq:Cn_split} induces
\begin{equation}
\mathbf z(t)=\mathbf z_{\mathrm{FA}}(t)+\mathbf z_{\mathrm{BL}}(t),
\label{eq:z_split}
\end{equation}
where $\mathbf z_j(t)\triangleq[\sigma_{a,j}^2(t),\,\beta_{\mathrm{rms},j}^2(t)]^\top$ for $j\in\{\mathrm{FA},\mathrm{BL}\}$, with
\begin{subequations}\label{eq:component_stats}
\begin{align}
\sigma_{a,j}^2(t)
&\triangleq
\int_{h_0}^{h_s}\mathcal K_a(h,t)\,C_{n,j}^2(h)\,dh,
\label{eq:sigma_component}
\\
\beta_{\mathrm{rms},j}^2(t)
&\triangleq
\int_{h_0}^{h_s}\mathcal K_\beta(t)\,C_{n,j}^2(h)\,dh.
\label{eq:beta_component}
\end{align}
\end{subequations}
The decomposition in \eqref{eq:Cn_split} is applied to the turbulence profile, while Proposition~\ref{prop:kernel_asymptotics} describes how the FA and BL components are weighted by the scintillation and AoA kernels.

\begin{proposition}
\label{prop:kernel_asymptotics}
For fixed $t$, $\mathcal K_\beta(t)$ is independent of $h$, whereas
\begin{equation}
\mathcal K_a(h,t)=C_a(t)(h-h_0)^2+\mathcal O\!\bigl((h-h_0)^4\bigr),
\qquad h\to h_0,
\label{eq:Ka_asym}
\end{equation}
with
\begin{equation}
C_a(t)=
\frac{29}{48}\,
k_0^{\frac{7}{6}}\csc^{\frac{11}{6}}\!\epsilon(t)\,
\alpha^{-\frac{7}{6}}(t)\,\Delta h^{-\frac{7}{6}}.
\label{eq:Ca_coeff}
\end{equation}
Hence $\mathcal K_a(h,t)$ and $\mathcal K_\beta(t)$ cannot be represented by a common scalar multiple.
\end{proposition}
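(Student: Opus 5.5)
The plan is a direct Taylor expansion of the bracket in \eqref{eq:Ka} around $\xi=0$. The $h$-independence of $\mathcal K_\beta(t)$ can be read off \eqref{eq:Kb}, since it depends only on $D_R$ and $\epsilon(t)$. For fixed $t$ we have $\alpha(t)>0$, and $\xi(h)=(h-h_0)/\Delta h\to 0$ as $h\to h_0$. We therefore factor out the positive real quantity $\alpha^{5/6}$ and write
\begin{equation*}
[\alpha+i\xi]^{\frac56}=\alpha^{\frac56}\Bigl(1+\tfrac{i\xi}{\alpha}\Bigr)^{\frac56},
\end{equation*}
using the principal branch. The binomial series in $u=i\xi/\alpha$ converges absolutely for $|\xi|<\alpha$, which is a neighbourhood of $h_0$, so the remainder estimates below are legitimate.

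Next we expand to fourth order:
\begin{equation*}
(1+u)^{\frac56}=1+\tfrac56u+\tfrac{\frac56(-\frac16)}{2}u^2+c_3u^3+\mathcal O(u^4).
\end{equation*}
Because $u$ is purely imaginary, $u$ and $u^3$ are imaginary and drop out under $\Re\{\cdot\}$. The term $u^2=-\xi^2/\alpha^2$ is real, giving
\begin{equation*}
\Re\bigl\{[\alpha+i\xi]^{\frac56}-\alpha^{\frac56}\bigr\}
=\tfrac{5}{72}\,\alpha^{-\frac76}\xi^2+\mathcal O(\xi^4).
\end{equation*}
The cancellation of the constant term is exactly why $-\alpha^{5/6}$ appears in \eqref{eq:Ka}. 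The absence of odd powers is what makes the error $\mathcal O((h-h_0)^4)$ rather than $\mathcal O((h-h_0)^3)$.

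Substituting $\xi=(h-h_0)/\Delta h$ and multiplying by the prefactor $8.70\,k_0^{7/6}\Delta h^{5/6}\csc^{11/6}\epsilon(t)$ gives a coefficient of $(h-h_0)^2$ equal to
\begin{equation*}
8.70\cdot\tfrac{5}{72}\,k_0^{\frac76}\csc^{\frac{11}{6}}\!\epsilon(t)\,\alpha^{-\frac76}(t)\,\Delta h^{\frac56-2}.
\end{equation*}
The check to carry out is the numerical constant: $8.70\cdot 5/72=43.5/72=29/48$, and $\Delta h^{5/6-2}=\Delta h^{-7/6}$. This reproduces \eqref{eq:Ca_coeff} exactly. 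The remainder is uniform for $h$ in a compact neighbourhood of $h_0$ with $t$ fixed, which yields \eqref{eq:Ka_asym}.

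For the final claim, suppose $\mathcal K_a(h,t)=c\,\mathcal K_\beta(t)$ for all $h\in[h_0,h_s]$ with some scalar $c$. Then $\mathcal K_a(\cdot,t)$ would be constant in $h$. By \eqref{eq:Ka_asym}, however, $\mathcal K_a(h_0,t)=0$, while $\mathcal K_a(h,t)\approx C_a(t)(h-h_0)^2>0$ for small $h-h_0>0$, since $C_a(t)>0$. So $\mathcal K_a$ is not constant, and no scalar multiple works. The same contradiction rules out $c=0$, and $\mathcal K_\beta(t)>0$ excludes the degenerate case.

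The main obstacle is not conceptual but bookkeeping. It consists of handling the complex power with the correct branch, confirming that odd-order terms are purely imaginary so the error is truly quartic, and matching the constant $29/48$ to the rounded coefficient $8.70$.
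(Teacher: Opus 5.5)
Your proposal is correct and follows the paper's proof essentially step for step. Like the paper, you factor out $\alpha^{5/6}$, binomially expand $(1+i\xi/\alpha)^{5/6}$, discard the purely imaginary odd-order terms, and verify $8.70\cdot 5/72=29/48$; your observation that the cubic term is also purely imaginary is what actually justifies the $\mathcal O\bigl((h-h_0)^4\bigr)$ remainder, which the paper leaves implicit since it truncates its expansion at $\mathcal O(\xi^3)$.
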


\begin{IEEEproof}
The $h$-independence of $\mathcal K_\beta(t)$ follows directly from \eqref{eq:Kb}. For $\mathcal K_a(h,t)$, let $\xi=\xi(h)$. As $h\to h_0$, one has $\xi\to 0$, and
\begin{equation}
\left(1+i\frac{\xi}{\alpha}\right)^{\frac{5}{6}}
=
1+\frac{5i}{6}\frac{\xi}{\alpha}
+\frac{5}{72}\frac{\xi^2}{\alpha^2}
+\mathcal O(\xi^3).
\label{eq:binom_expand}
\end{equation}
Since the linear term is purely imaginary,
\begin{equation}
\Re\!\left\{(\alpha+i\xi)^{\frac{5}{6}}-\alpha^{\frac{5}{6}}\right\}
=
\frac{5}{72}\alpha^{-\frac{7}{6}}\xi^2+\mathcal O(\xi^4).
\label{eq:real_expand}
\end{equation}
Substituting \eqref{eq:real_expand} into \eqref{eq:Ka} and using $\xi(h)=(h-h_0)/\Delta h$ yields \eqref{eq:Ka_asym} and \eqref{eq:Ca_coeff}. The last claim follows because $\mathcal K_\beta(t)$ is constant in $h$, whereas $\mathcal K_a(h,t)$ vanishes quadratically as $h\to h_0$.
\end{IEEEproof}

Proposition~\ref{prop:kernel_asymptotics} shows that near-ground turbulence is suppressed in $\mathcal K_a(h,t)$ but not in $\mathcal K_\beta(t)$. A single scalar perturbation of $C_n^2(h)$ therefore cannot induce proportional variations in $\sigma_a^2(t)$ and $\beta_{\mathrm{rms}}^2(t)$, which motivates the FA/BL decomposition and the state parameterization in Section~\ref{sec:II-C}.

\subsubsection{Aperture-Averaged Scintillation}

Using the aperture-averaged scintillation index $\sigma_a^2(t)$ obtained in \eqref{eq:sigma_component}, we model the scintillation by a unit-mean Gamma distribution, which is a suitable approximation for downlinks with aperture averaging and a small scintillation index~\cite{E2E_JSAC}. Accordingly, we define
\begin{equation}
h_a(t)\triangleq \frac{P_a(t)}{\mathbb E[P_a(t)]},
\label{eq:ha_def}
\end{equation}
where $P_a(t)$ denotes the aperture-collected optical power affected by scintillation. Since $\sigma_a^2(t)$ is the normalized variance of $P_a(t)$, the normalized scintillation factor satisfies $\mathbb E[h_a(t)]=1$ and $\mathrm{Var}[h_a(t)]=\sigma_a^2(t)$. With $m_a(t)\triangleq 1/\sigma_a^2(t)$, the corresponding PDF is
\begin{equation}
f_{h_a}(x|t)=
\frac{[m_a(t)]^{m_a(t)}}{\Gamma(m_a(t))}
x^{m_a(t)-1}e^{-m_a(t)x},
\qquad x>0,
\label{eq:ha_pdf}
\end{equation}
where $\Gamma(\cdot)$ is the Gamma function.

\subsubsection{Effective Angular Loss}

Using the AoA statistic $\beta_{\mathrm{rms}}^2(t)$ in \eqref{eq:beta_component}, we next model the effective angular loss. In the downlink, tip-tilt correction reduces but does not eliminate the AoA fluctuation, so we introduce a residual angular correction factor $\eta_{\mathrm{tt}}\in(0,1]$, where $\eta_{\mathrm{tt}}=1$ corresponds to no correction. Let $\boldsymbol{\theta}_a(t)$ denote the residual atmospheric angular error. Under the zero-mean 2-D isotropic Gaussian assumption, its mean-square magnitude is $\eta_{\mathrm{tt}}\beta_{\mathrm{rms}}^2(t)$, so the corresponding per-axis variance is
\begin{equation}
\sigma_{\theta_a}^2(t)=\frac{1}{2}\eta_{\mathrm{tt}}\beta_{\mathrm{rms}}^2(t).
\label{eq:sigma_theta_a}
\end{equation}
We then define the total angular misalignment as
\begin{equation}
\boldsymbol{\theta}(t)=\boldsymbol{\theta}_a(t)+\boldsymbol{\theta}_m(t),
\label{eq:theta_sum}
\end{equation}
where $\boldsymbol{\theta}_a(t)\sim\mathcal N(\mathbf 0,\sigma_{\theta_a}^2(t)\mathbf I_2)$ and $\boldsymbol{\theta}_m(t)\sim\mathcal N(\mathbf 0,\sigma_m^2\mathbf I_2)$ denotes an independent mechanical-jitter component. It follows that $\boldsymbol{\theta}(t)\sim\mathcal N(\mathbf 0,\sigma_\theta^2(t)\mathbf I_2)$ with $\sigma_\theta^2(t)=\sigma_m^2+\sigma_{\theta_a}^2(t)$. Since the norm of a zero-mean 2-D isotropic Gaussian vector is Rayleigh distributed, the effective angular-loss factor is modeled as
\begin{equation}
h_p(t)=\exp\!\left[-\frac{2\|\boldsymbol{\theta}(t)\|^2}{\theta_{\mathrm{eq}}^2}\right],
\label{eq:hp_def}
\end{equation}
where $\theta_{\mathrm{eq}}$ is an effective angular acceptance parameter, defined as $\theta_{\mathrm{eq}}^2\triangleq \theta_{\mathrm{FOV}}^2+(\lambda_0/D_R)^2$. This parameter approximates the receiver acceptance angle by combining the FOV contribution and the angular spread set by diffraction under a Gaussian acceptance model. The resulting density is
\begin{equation}
f_{h_p}(u|t)=q_p(t)u^{q_p(t)-1},
\qquad 0<u\le 1,
\label{eq:hp_pdf}
\end{equation}
with $q_p(t)\triangleq \theta_{\mathrm{eq}}^2/(4\sigma_\theta^2(t))$.

\subsection{State-Coupled Joint Fading Model}
\label{sec:II-C}
Conventional FSO analyses typically impose unconditional independence, i.e., $f_{h_a,h_p}(x,u|t)=f_{h_a}(x|t)f_{h_p}(u|t)$. Since deriving the joint PDF directly from the continuous turbulence profile is not tractable for outage analysis, we represent the slow atmospheric variability by a finite mixture over discrete states. Let $S(t)\in\mathcal{W}\triangleq\{1,2,\ldots,W\}$ denote the discrete atmospheric state, with $\pi_s(t)\triangleq\Pr[S(t)=s]$, $\pi_s(t)\ge0$, and $\sum_{s\in\mathcal{W}}\pi_s(t)=1$. For each state $s$, define a scaling vector for the FA and BL components as $\boldsymbol{\chi}_s\triangleq[\chi_{s,\mathrm{FA}},\,\chi_{s,\mathrm{BL}}]^\top$. The corresponding turbulence profile conditioned on state $s$ is
\begin{equation}
C_{n,s}^2(h)=\chi_{s,\mathrm{FA}}C_{n,\mathrm{FA}}^2(h)+\chi_{s,\mathrm{BL}}C_{n,\mathrm{BL}}^2(h).
\end{equation}

Because \eqref{eq:component_stats} is linear in $C_n^2(h)$, this profile determines the second-order statistics under state $s$ as
\begin{equation}
\mathbf z_s(t)\triangleq
\begin{bmatrix}
\sigma_{a,s}^2(t)\\
\beta_{\mathrm{rms},s}^2(t)
\end{bmatrix}
=
\underbrace{
\begin{bmatrix}
\sigma_{a,\mathrm{FA}}^2(t) & \sigma_{a,\mathrm{BL}}^2(t)\\
\beta_{\mathrm{rms},\mathrm{FA}}^2(t) & \beta_{\mathrm{rms},\mathrm{BL}}^2(t)
\end{bmatrix}}_{\mathbf M(t)}
\boldsymbol{\chi}_s.
\label{eq:z_state}
\end{equation}
Given the slow atmospheric state $S(t)=s$, the remaining fast fluctuations on the millisecond scale are assumed to be conditionally independent:
\begin{equation}
f_{h_a,h_p|S}(x,u|s,t)=f_{h_a|S}(x|s,t)f_{h_p|S}(u|s,t).
\label{eq:conditional_ind}
\end{equation}
The conditional marginals retain the parametric forms of \eqref{eq:ha_pdf} and \eqref{eq:hp_pdf}, with parameters $m_{a,s}(t)=1/\sigma_{a,s}^2(t)$ and $q_{p,s}(t)=\theta_{\mathrm{eq}}^2/\!\bigl(4\sigma_m^2+2\eta_{\mathrm{tt}}\beta_{\mathrm{rms},s}^2(t)\bigr)$. Averaging over the state space gives the unconditional joint PDF as
\begin{equation}
f_{h_a,h_p}(x,u|t)=
\sum_{s\in\mathcal W}\pi_s(t)\,
f_{h_a|S}(x|s,t)\,
f_{h_p|S}(u|s,t).
\label{eq:joint_uncond}
\end{equation}

Unlike the independent baseline, this mixture distribution is generally not factorizable into the product of two marginal PDFs. Thus, $h_a(t)$ and $h_p(t)$ are coupled through the shared slow state $S(t)$, while remaining conditionally independent within each state.

\section{Performance Analysis}
\label{sec:III}

We characterize the proposed joint fading model with atmospheric state coupling through the instantaneous outage probability. For a fixed time $t$, an outage occurs when the received optical power is below the threshold $P_{\mathrm{th}}$, i.e.,
$P_t h_c(t)h_a(t)h_p(t)<P_{\mathrm{th}}$.
Because $h_c(t)$ is deterministic at a given $t$, we define the normalized threshold as
\begin{equation}
\nu(t)\triangleq \frac{P_{\mathrm{th}}}{P_t h_c(t)},
\label{eq:nu_def}
\end{equation}
and the outage event becomes $h_a(t)h_p(t)<\nu(t)$.

From \eqref{eq:joint_uncond}, the unconditional outage probability is
\begin{equation}
P_{\mathrm{out}}^{\mathrm{cpl}}(t)
=
\sum_{s\in\mathcal W}\pi_s(t)\,P_{\mathrm{out},s}(t),
\label{eq:pout_mix}
\end{equation}
where
$P_{\mathrm{out},s}(t)\triangleq
\Pr[h_a(t)h_p(t)<\nu(t)\mid S(t)=s]$.
Using the conditional independence in \eqref{eq:conditional_ind}, the outage probability conditioned on state $s$ is
\begin{equation}
P_{\mathrm{out},s}(t)=
\int_0^1
F_{h_a|S}\!\left(\frac{\nu(t)}{u}\Bigm|s,t\right)
f_{h_p|S}(u|s,t)\,du.
\label{eq:pout_cond_int}
\end{equation}

Let $m_s\triangleq m_{a,s}(t)$ and $q_s\triangleq q_{p,s}(t)$.
Since $h_a(t)\mid S(t)=s$ follows the Gamma distribution in \eqref{eq:ha_pdf}, its conditional cumulative distribution function (CDF) is
\begin{equation}
F_{h_a|S}\!\left(\frac{\nu(t)}{u}\Bigm|s,t\right)
=
\frac{\gamma\!\left(m_s,\frac{m_s\nu(t)}{u}\right)}{\Gamma(m_s)},
\label{eq:cond_cdf}
\end{equation}
where $\gamma(\cdot,\cdot)$ denotes the lower incomplete Gamma function. Substituting \eqref{eq:cond_cdf} and \eqref{eq:hp_pdf} into \eqref{eq:pout_cond_int} gives
\begin{equation}
P_{\mathrm{out},s}(t)
=
\frac{q_s}{\Gamma(m_s)}
\int_0^1
\gamma\!\left(m_s,\frac{m_s\nu(t)}{u}\right)
u^{q_s-1}\,du.
\label{eq:pout_1d}
\end{equation}

\begin{proposition}
For $m_s>0$ and $q_s>0$, the conditional outage probability in \eqref{eq:pout_1d} is
\begin{equation}
\begin{aligned}
P_{\mathrm{out},s}(t)
&=
\frac{1}{\Gamma(m_s)}
\Bigl[
\gamma\!\left(m_s,m_s\nu(t)\right) \\
&\qquad+
\left(m_s\nu(t)\right)^{q_s}
\Gamma\!\left(m_s-q_s,m_s\nu(t)\right)
\Bigr],
\end{aligned}
\label{eq:pout_cond_closed}
\end{equation}
where $\Gamma(\cdot,\cdot)$ denotes the upper incomplete Gamma function. For $m_s\le q_s$, $\Gamma(m_s-q_s,m_s\nu(t))$ is evaluated by its standard extension, which is finite for $m_s\nu(t)>0$.
\end{proposition}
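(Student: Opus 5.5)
We will evaluate the integral in \eqref{eq:pout_1d} by integration by parts, using the antiderivative $u^{q_s}/q_s$ of $u^{q_s-1}$. Writing $a\triangleq m_s\nu(t)>0$ and $g(u)\triangleq\gamma(m_s,a/u)$, we obtain
\begin{equation*}
q_s\int_0^1 g(u)\,u^{q_s-1}\,du
=\Bigl[u^{q_s}g(u)\Bigr]_{0}^{1}-\int_0^1 u^{q_s}g'(u)\,du .
\end{equation*}
At $u=1$ the boundary term equals $\gamma(m_s,a)$. As $u\to0^+$ we have $g(u)\to\Gamma(m_s)$, so $u^{q_s}g(u)\to0$ because $q_s>0$. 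Hence the boundary contribution is exactly $\gamma(m_s,a)$, which produces the first term of \eqref{eq:pout_cond_closed} after division by $\Gamma(m_s)$.

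For the remaining integral we use $\frac{\partial}{\partial x}\gamma(m,x)=x^{m-1}e^{-x}$ and the chain rule with $x=a/u$, $dx/du=-a/u^2$:
\begin{equation*}
g'(u)=-\frac{a}{u^{2}}\Bigl(\frac{a}{u}\Bigr)^{m_s-1}e^{-a/u}.
\end{equation*}
Thus $-\int_0^1 u^{q_s}g'(u)\,du=\int_0^1 u^{q_s-2}(a/u)^{m_s}e^{-a/u}\,du$. We then substitute $x=a/u$, so that $u=a/x$ and $du=-a\,x^{-2}dx$, and the limits $u\in(0,1]$ become $x\in[a,\infty)$. Collecting powers of $x$ and $a$ gives
\begin{equation*}
a^{q_s}\int_a^\infty x^{m_s-q_s-1}e^{-x}\,dx=a^{q_s}\,\Gamma(m_s-q_s,a).
\end{equation*}
Dividing by $\Gamma(m_s)$ yields \eqref{eq:pout_cond_closed}.

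The main point needing care is the validity of the final integral when $m_s\le q_s$, since the order $m_s-q_s$ is then nonpositive. Because the lower limit $a=m_s\nu(t)$ is strictly positive, the integrand $x^{m_s-q_s-1}e^{-x}$ is integrable on $[a,\infty)$ for every real order. The integral $\int_a^\infty x^{s-1}e^{-x}\,dx$ is precisely the standard definition of $\Gamma(s,a)$ extended to all real $s$, so it is finite, consistent with the remark in the statement.

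We should also justify the integration by parts on the improper interval $(0,1]$. This is straightforward: $g$ is bounded and smooth on $(0,1]$, $u^{q_s-1}$ is integrable near $0$ for $q_s>0$, and $u^{q_s}g'(u)$ decays like $e^{-a/u}$ times a power as $u\to0^+$. It therefore suffices to integrate by parts on $[\delta,1]$ and let $\delta\to0^+$.
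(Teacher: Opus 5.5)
Your proof is correct and is essentially the paper's argument with the two steps done in the opposite order. The paper substitutes $z=m_s\nu/u$ first and then integrates by parts in $z$ with $U=\gamma(m_s,z)$; you integrate by parts in $u$ first and then substitute, which gives the same boundary term and the same final integral.

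One slip needs fixing. The correct expression is $-u^{q_s}g'(u)=u^{q_s-1}(a/u)^{m_s}e^{-a/u}$, not $u^{q_s-2}(a/u)^{m_s}e^{-a/u}$. Only the corrected exponent yields $a^{q_s}\int_a^\infty x^{m_s-q_s-1}e^{-x}\,dx$ after $x=a/u$; your written exponent would instead give $a^{q_s-1}\Gamma(m_s-q_s+1,a)$.
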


\begin{IEEEproof}
The proof is given in Appendix.
\end{IEEEproof}

\section{Numerical Results}
In this section, numerical results are provided to assess the proposed state-coupled model against the conventional independent baseline. Unless otherwise stated, the system parameters are listed in Table~\ref{tab:system_parameters}. For comparison normalization, the outage threshold $P_{\mathrm{th}}$ is set by a reference operating point such that the independent baseline satisfies $P_{\mathrm{out}}=10^{-2}$ at $\epsilon_{\mathrm{ref}}=25^\circ$. The same threshold is then used in all subsequent results.

\subsection{Second-Order Statistics Analysis}

As shown in \eqref{eq:z_state}, the atmospheric state determines the two second-order descriptors through the FA and BL scaling factors. We consider three atmospheric cases, $c\in\{\mathrm{nom},\mathrm{BL},\mathrm{FA}\}$, corresponding to the nominal, BL-dominant, and FA-dominant regimes. For state $s$ in case $c$, let $\mathbf z_{s,c}(t)$ denote the corresponding vector of second-order statistics. The state-averaged statistics are defined as
\begin{equation}
\bar{\mathbf z}_c(t)\triangleq
[\bar{\sigma}_{a,c}^2(t),\bar{\beta}_{\mathrm{rms},c}^2(t)]^\top
=
\sum_{s\in\mathcal W}\pi_s\,\mathbf z_{s,c}(t).
\label{eq:case_avg_stats}
\end{equation}
The FA and BL scaling factors below are normalized scenario multipliers applied to the baseline HV profile in \eqref{eq:Cn_HV}, rather than universal atmospheric constants. We use three states, $\mathcal W=\{1,2,3\}$, with fixed probabilities $\{\pi_s\}_{s=1}^3=[0.4,0.4,0.2]$ for low, moderate, and strong slow atmospheric states. The nominal case uses $\{\chi_{s,\mathrm{FA}}\}_{s=1}^3=[0.7,1.0,1.3]$ and $\{\chi_{s,\mathrm{BL}}\}_{s=1}^3=[0.7,1.0,1.5]$. The BL-dominant case uses $\{\chi_{s,\mathrm{FA}}\}_{s=1}^3=[1.0,1.4,1.8]$ and $\{\chi_{s,\mathrm{BL}}\}_{s=1}^3=[1.0,2.0,3.5]$, whereas the FA-dominant case uses $\{\chi_{s,\mathrm{FA}}\}_{s=1}^3=[1.0,1.5,2.5]$ and $\{\chi_{s,\mathrm{BL}}\}_{s=1}^3=[1.0,1.2,1.5]$. With measured or reanalysis turbulence profiles, the state probabilities and scaling factors can be calibrated for a specific site and observation period.

As shown in Fig.~\ref{fig:2}, both $\bar{\sigma}_{a,c}^2$ and $\bar{\beta}_{\mathrm{rms},c}^2$ decrease with elevation because a higher elevation shortens the atmospheric propagation path. However, the two statistics exhibit different case orderings. Over the considered elevation range, the FA-dominant case gives the largest $\bar{\sigma}_{a,c}^2$, whereas the BL-dominant case gives the largest $\bar{\beta}_{\mathrm{rms},c}^2$. This difference follows from the kernels in \eqref{eq:kernels} and the FA/BL scaling in \eqref{eq:z_state}, which weight the layered turbulence perturbations differently for scintillation and AoA. For example, at $\epsilon=25^\circ$, the nominal case gives $(\bar{\sigma}_{a,\mathrm{nom}}^2,\bar{\beta}_{\mathrm{rms},\mathrm{nom}}^2)=(0.0494,2.29\times10^{-11})$. The BL-dominant case gives $(0.0672,4.04\times10^{-11})$, while the FA-dominant case gives $(0.0763,2.88\times10^{-11})$. These results indicate that a single scalar perturbation of $C_n^2(h)$ is insufficient to represent layered slow atmospheric variability.

\subsection{Outage Probability Comparison}

\begin{table}[t]
\caption{System parameters}
\label{tab:system_parameters}
\centering
\footnotesize
\setlength{\tabcolsep}{4pt}
\renewcommand{\arraystretch}{1.05}
\begin{tabular}{llll}
\hline
\textbf{Parameter} & \textbf{Value} & \textbf{Parameter} & \textbf{Value} \\
\hline
$R_E$ & $6371~\mathrm{km}$ &
$h_s$ & $550~\mathrm{km}$ \\

$h_0$ & $20~\mathrm{m}$ &
$\lambda_0$ & $1550~\mathrm{nm}$ \\

$P_t$ & $2~\mathrm{W}$ &
$D_T$ & $0.10~\mathrm{m}$ \\

$D_R$ & $0.30~\mathrm{m}$ &
$\eta_T, \eta_R$ & $0.80$ \\

$V$ & $15~\mathrm{km}$ &
$q(V)$ & $1.3$ \\

$H_{\mathrm{BL}}$ & $2~\mathrm{km}$ &
$T_{\mathrm{abs}}(\lambda_0)$ & $e^{-0.05}$ \\

$A_0$ & $1.7\times10^{-14}~\mathrm{m}^{-2/3}$ &
$v_{\mathrm{rms}}$ & $21~\mathrm{m/s}$ \\

$\theta_{\mathrm{FOV}}$ & $20~\mu\mathrm{rad}$ &
$\sigma_m$ & $2~\mu\mathrm{rad}$ \\
\hline
\end{tabular}
\end{table}

\begin{figure}[t]
\centering
\includegraphics[width=0.9\linewidth]{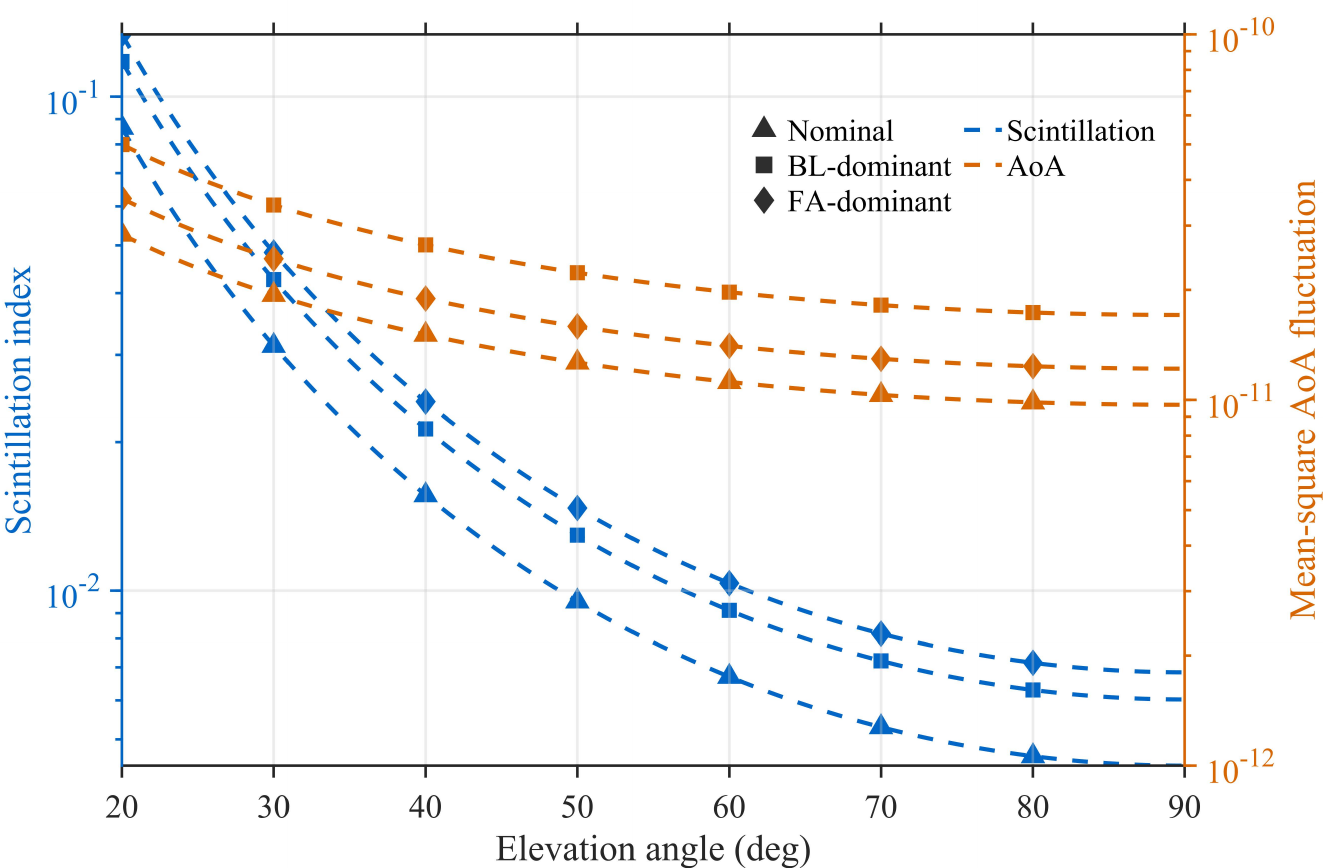}
\caption{State-averaged scintillation index and mean-square AoA fluctuation versus elevation under the nominal, BL-dominant, and FA-dominant atmospheric cases.}
\label{fig:2}
\end{figure}

With $P_{\mathrm{th}}$ fixed by the reference operating point, Fig.~\ref{fig:3} shows how the second-order statistics in Fig.~\ref{fig:2} are mapped into outage behavior. At $\epsilon=25^\circ$, the independent baseline gives the prescribed outage probability $10^{-2}$. The nominal, BL-dominant, and FA-dominant cases give $1.04\times10^{-2}$, $2.85\times10^{-2}$, and $3.30\times10^{-2}$, respectively. This result shows that the independent baseline is nearly aligned with the nominal case at the reference point, but can underestimate outage under the considered non-nominal atmospheric cases.

The ordering of the coupled cases changes with elevation. At $\epsilon=30^\circ$, the FA-dominant case gives the largest outage, $9.22\times 10^{-4}$, compared with $4.83\times 10^{-4}$ for the BL-dominant case and $1.60\times 10^{-5}$ for the independent baseline. At $\epsilon=40^\circ$, the ordering between the two coupled cases reverses, with the BL-dominant case giving $4.16\times 10^{-8}$ and the FA-dominant case giving $1.54\times 10^{-8}$. This reversal suggests that scintillation has a stronger influence at lower elevations, whereas angular loss becomes relatively more influential as the elevation increases. The discrepancy between the proposed model and the independent baseline is therefore not a fixed outage offset, but is governed by the elevation-dependent balance between scintillation and angular loss. The inset compares the analytical curves with Monte Carlo estimates in the resolvable outage range using $10^6$ realizations. Within the plotted low-elevation range where the empirical outage is above $10^{-4}$, the maximum relative error is below 4.97\%.

\subsection{Residual Angular Correction Requirement}

\begin{figure}[t]
\centering
\includegraphics[width=0.9\linewidth]{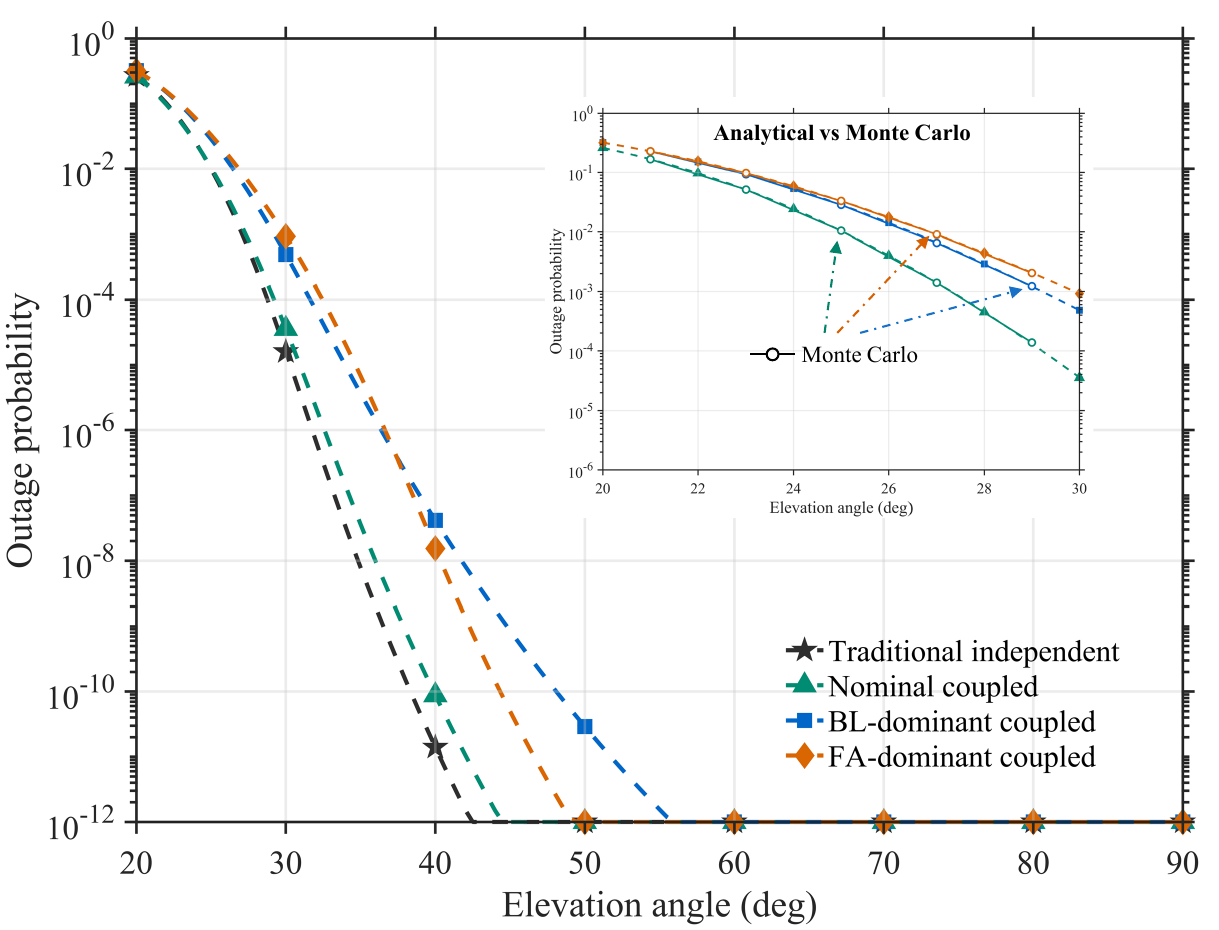}
\caption{Outage probability versus elevation for the independent baseline and the proposed state-coupled model. The inset compares analytical and Monte Carlo results in the resolvable outage range.}
\label{fig:3}
\end{figure}

\begin{figure}[t]
\centering
\includegraphics[width=0.9\linewidth]{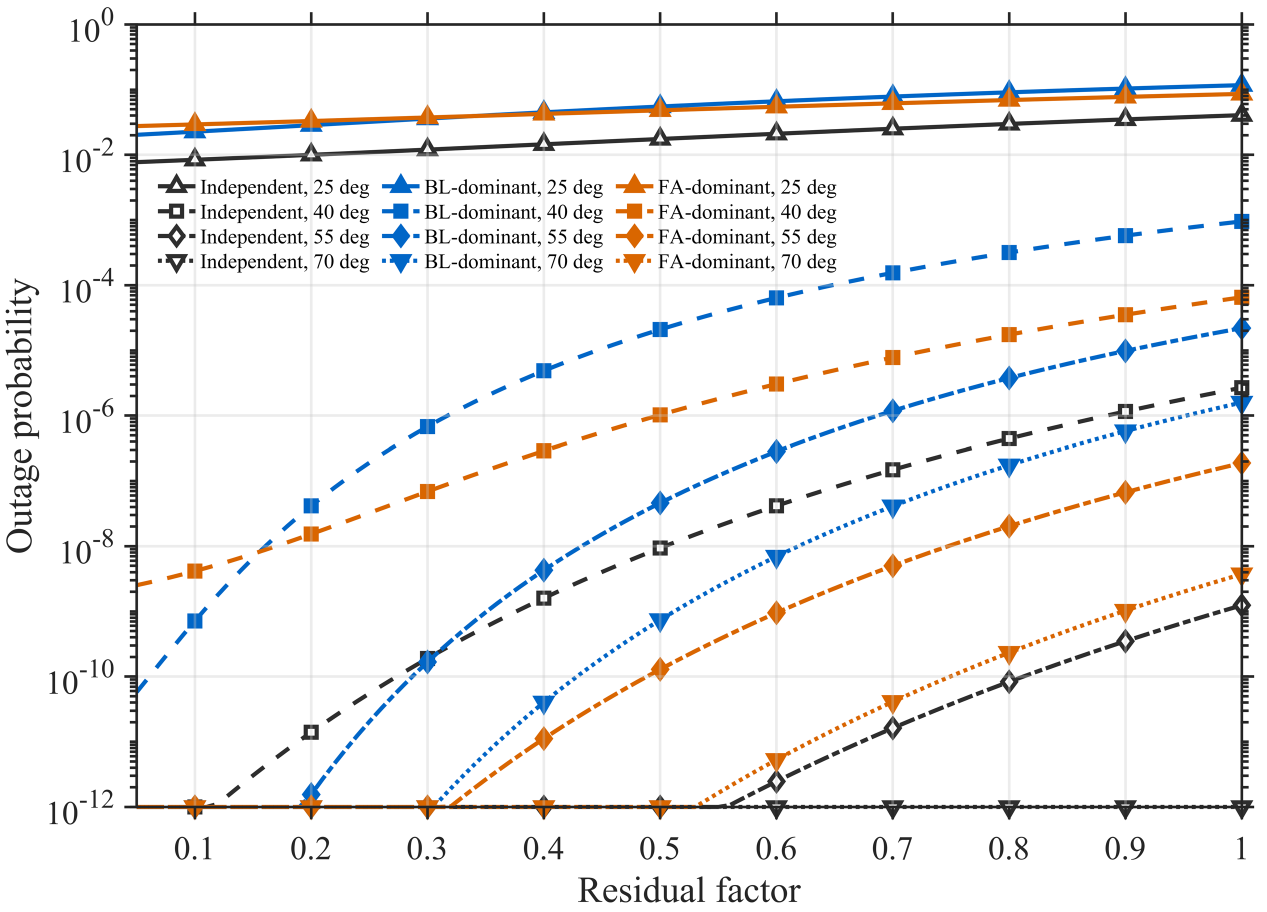}
\caption{Outage probability versus residual angular correction factor for the independent baseline and the proposed state-coupled model under BL-dominant and FA-dominant atmospheric cases.}
\label{fig:4}
\end{figure}

Fig.~\ref{fig:4} compares the outage probability versus the residual angular correction factor $\eta_{\mathrm{tt}}$ for the BL-dominant and FA-dominant cases at $\epsilon=25^\circ$, $40^\circ$, $55^\circ$, and $70^\circ$. A smaller $\eta_{\mathrm{tt}}$ represents stronger suppression of residual AoA fluctuations. According to \eqref{eq:sigma_theta_a} and \eqref{eq:hp_pdf}, $\eta_{\mathrm{tt}}$ affects outage through the residual AoA variance and the resulting angular-loss distribution. The separation between the BL-dominant and FA-dominant curves follows the ordering of $\bar{\beta}_{\mathrm{rms},c}^2$ in Fig.~\ref{fig:2}.

At $\epsilon=25^\circ$, the independent baseline reaches $10^{-2}$ at $\eta_{\mathrm{tt}}=0.2$, whereas both coupled cases remain above $10^{-2}$ over $0.05\leq\eta_{\mathrm{tt}}\leq1$. At $\epsilon=40^\circ$, $55^\circ$, and $70^\circ$, the BL-dominant case gives larger outage than the FA-dominant case for the same $\eta_{\mathrm{tt}}$, consistent with $\bar{\beta}_{\mathrm{rms},\mathrm{BL}}^2>\bar{\beta}_{\mathrm{rms},\mathrm{FA}}^2$ in Fig.~\ref{fig:2}. For a fixed outage target, the BL-dominant case therefore requires a smaller residual angular correction factor. The independent baseline removes this atmospheric dependence and can bias the residual angular correction requirement in link design.

\section{Conclusion}
This paper revisited outage analysis for LEO-to-ground optical links by considering the common turbulence origin of scintillation after aperture averaging and angular loss induced by AoA fluctuations. Rather than treating the two fading factors as unconditionally independent, the proposed model introduced a discrete slow atmospheric state to parameterize the FA and BL turbulence components. Conditioned on this state, the marginal fading models remain tractable, while averaging over the state space yields a joint distribution that cannot be factorized. The resulting outage probability was expressed as a one-dimensional integral and further reduced to closed form. Numerical results showed that the independent baseline can misestimate outage under non-nominal layered turbulence states because the relative roles of scintillation and angular loss vary with elevation. These variations lead to different residual angular correction requirements for a given outage target. Therefore, outage prediction and angular correction design for LEO optical downlinks should account for atmospheric state coupling rather than rely solely on unconditional independence.

\section*{Acknowledgment}
This work was supported by the Science and Technology Commission of Shanghai Municipality (17DZ2280600).

\appendix
\label{app:prop1}

Let \(\nu\triangleq \nu(t)\). From \eqref{eq:pout_1d}, by setting \(z=m_s\nu/u\), one has \(u=m_s\nu/z\) and \(du=-(m_s\nu)z^{-2}dz\). Thus, we have
\begin{equation}
P_{\mathrm{out},s}(t)
=
\frac{q_s(m_s\nu)^{q_s}}{\Gamma(m_s)}
\int_{m_s\nu}^{\infty}
\gamma(m_s,z)\,z^{-q_s-1}\,dz.
\label{eq:app_subst}
\end{equation}

Next, applying integration by parts to \eqref{eq:app_subst} with
\(U(z)=\gamma(m_s,z)\) and
\(dV=q_s(m_s\nu)^{q_s}z^{-q_s-1}dz/\Gamma(m_s)\), it follows that
\(dU=z^{m_s-1}e^{-z}dz\) and
\(V=-(m_s\nu)^{q_s}z^{-q_s}/\Gamma(m_s)\). ~\eqref{eq:app_subst} can be expanded as
\begin{equation}
\begin{aligned}
P_{\mathrm{out},s}(t)
&=
\left[
-\frac{(m_s\nu)^{q_s}}{\Gamma(m_s)}
\gamma(m_s,z)\,z^{-q_s}
\right]_{m_s\nu}^{\infty} \\
&\quad+
\frac{(m_s\nu)^{q_s}}{\Gamma(m_s)}
\int_{m_s\nu}^{\infty}
z^{m_s-q_s-1}e^{-z}\,dz .
\end{aligned}
\label{eq:app_parts}
\end{equation}

Since \(q_s>0\), the boundary term in \eqref{eq:app_parts} vanishes as \(z\to\infty\). At the lower limit \(z=m_s\nu\), it reduces to \(\gamma(m_s,m_s\nu)/\Gamma(m_s)\). The remaining integral is recognized as the upper incomplete Gamma function, i.e.,
\begin{equation}
\int_{m_s\nu}^{\infty}
z^{m_s-q_s-1}e^{-z}\,dz
=
\Gamma(m_s-q_s,m_s\nu).
\label{eq:app_gamma}
\end{equation}
Substituting these results into \eqref{eq:app_parts} yields
\begin{equation}
P_{\mathrm{out},s}(t)
=
\frac{\gamma(m_s,m_s\nu)}{\Gamma(m_s)}
+
\frac{(m_s\nu)^{q_s}\Gamma(m_s-q_s,m_s\nu)}{\Gamma(m_s)}.
\end{equation}
This proves \eqref{eq:pout_cond_closed}, and the case $m_s=q_s$ follows by continuity.

\bibliographystyle{IEEEtran}
\bibliography{References}

\vspace{12pt}

\end{document}